\newtheorem{theorem}{Theorem}
\newtheorem{definition}{Definition}
\title{\LARGE \bf
Upper Bounds for Continuous-Time End-to-End Risks \\ in Stochastic Robot Navigation 
}
\author{ Apurva Patil$^1$ \and Takashi Tanaka$^2$ % <-this % stops a space
\thanks{*This work is supported by Lockheed Martin Corporation and FOA-AFRL-AFOSR-2019-0003.} \thanks{$^{1}$Walker Department of Mechanical Engineering, University of Texas at Austin. {\tt\small apurvapatil@utexas.edu}. $^{2}$Department of Aerospace Engineering and Engineering Mechanics, University of Texas at Austin. {\tt\small ttanaka@utexas.edu}.}
}
\begin{document}

\maketitle
\thispagestyle{empty}
\pagestyle{empty}

%%%%%%%%%%%%%%%%%%%%%%%%%%%%%%%%%%%%%%%%%%%%%%%%%%%%%%%%%%%%%%%%%%%%%%%%%%%%%%%%
\begin{abstract}
We present an analytical method to estimate the continuous-time collision probability of motion plans for autonomous agents with linear controlled It\^{o} dynamics. Motion plans generated by planning algorithms cannot be perfectly executed by autonomous agents in reality due to the inherent uncertainties in the real world. Estimating end-to-end risk is crucial to characterize the safety of trajectories and plan risk optimal trajectories. In this paper, we derive upper bounds for the continuous-time risk in stochastic robot navigation using the properties of Brownian motion as well as Boole and Hunter's inequalities from probability theory. Using a ground robot navigation example, we numerically demonstrate that our method is considerably faster than the na\"ive Monte Carlo sampling method and the proposed bounds perform better than the discrete-time risk bounds.
\end{abstract}

%%%%%%%%%%%%%%%%%%%%%%%%%%%%%%%%%%%%%%%%%%%%%%%%%%%%%%%%%%%%%%%%%%%%%%%%%%%%%%%%
\section{Introduction}\label{Sec: Intro}
Motion plans for mobile robots in obstacle-filled environments can be generated by autonomous trajectory planning algorithms \cite{lavalle2006planning}. In reality, due to the presence of uncertainties, the robots cannot follow the planned trajectories perfectly, and collisions with obstacles occur with a non-zero probability, in general. To address this issue, risk-aware motion planning has received considerable attention \cite{pepy2006safe}, \cite{blackmore2011chance}. Optimal planning under set-bounded uncertainty provides some solutions against worst-case disturbances \cite{majumdar2013robust}, \cite{lopez2019dynamic}.  However, in many cases, modeling uncertainties with unbounded (e.g. Gaussian) distributions has a number of advantages over a set-bounded approach \cite{blackmore2011chance}. In the case of unbounded uncertainties, it is generally difficult to guarantee safety against all realizations of noise. This motivates for an efficient risk estimation technique that can both characterize the safety of trajectories and be embedded in the planning algorithms to allow explicit trade-offs between control optimality and safety. In this paper, we develop an analytical method of continuous-time risk estimation for autonomous agents with linear controlled It\^{o} dynamics of the form (\ref{ito process}). We assume that a planned trajectory with a finite length in a known configuration space $\mathcal{X}\subseteq \mathbb{R}^n$ is given and a robot tracks this trajectory in finite time $T$. If $\pmb{x}^{sys}(t)\in\mathcal{X}$ represents the robot's position at time $t$, and $\mathcal{X}_{obs}\subset\mathcal{X}$ is the obstacle region, then the continuous-time end-to-end risk $\mathcal{R}$ in the navigation of the given trajectory can be written as
\begin{equation} \label{continuous risk}
 \mathcal{R}=P\left(\bigcup\limits_{t\in[0, T]}\pmb{x}^{sys}(t)\in\mathcal{X}_{obs}\right). 
 \end{equation}
Unfortunately, exact evaluation of (\ref{continuous risk}) is a challenging task because all the states $\pmb{x}^{sys}(t)$ across the time horizon $[0, T]$ are correlated with each other. In this paper, we derive two upper bounds for $\mathcal{R}$ by leveraging properties of Brownian motion (also called a Wiener process) as well as Boole and Hunter's inequalities from probability theory.\par
Monte Carlo and other sampling based methods \cite{blackmore2010probabilistic}, \cite{janson2018monte} provide accurate estimates of (\ref{continuous risk}) by computing the ratio of the number of simulated executions that collide with obstacles. However, these methods are often computationally expensive due to the need for a large number of simulation runs to obtain reliable estimates and are cumbersome to embed in planning algorithms.\par
The discrete-time risk estimation methods compute risks at the discretized time steps $t_i$, $i=0, 1, \hdots , N$, and approximate the probability in (\ref{continuous risk}) by
\begin{equation} \label{discrete risk}
 \mathcal{R}\approx P\left(\bigcup\limits_{i=0}^{N}\pmb{x}^{sys}(t_i)\in\mathcal{X}_{obs}\right).
 \end{equation}
Since the states $\{\pmb{x}^{sys}(t_i)\}_{i=0, 1, \hdots, N}$ are correlated with each other, evaluating the joint probability (\ref{discrete risk}) exactly is computationally expensive \cite{patil2021upper}. Several approaches have been proposed in the literature to upper bound this joint probability \cite{blackmore2011chance}, \cite{patil2021upper}, \cite{ono2015chance}. The commonly used approach is to use \textit{Boole's inequality} (a.k.a. union bound) which states that for any number of events $\mathcal{E}_j$, we have
\begin{equation}\label{Boole's inequality}
 P\left(\bigcup\limits_{j=1}^{N}\mathcal{E}_j\right)\leq \sum\limits_{j=1}^{N}P\left(\mathcal{E}_j\right).   
\end{equation}
Using this inequality, the probability in (\ref{discrete risk}) can be decomposed the over timesteps as \cite{blackmore2011chance}:
\begin{equation}\label{discrete-time Boole}
\mathcal{R}\leq\sum\limits_{i=0}^{N}P\left(\pmb{x}^{sys}(t_i)\in\mathcal{X}_{obs}\right).    
\end{equation}
While the discrete-time risk estimation approaches can be applied for continuous-time systems, their performance is highly sensitive to the chosen time discretization. They may underestimate the risk when the sampling rate is low or may produce severely conservative estimates when the sampling rate is high \cite{patil2021upper}.\par
Various continuous-time risk estimation approaches also have been proposed in the literature such as the approaches based on stochastic control barrier functions \cite{santoyo2021barrier}, \cite{yaghoubi2020risk}, cumulative lyapunov exponent \cite{oguri2019convex}, and first-exit times \cite{shah2011probability}, \cite{frey2020collision}, \cite{ariu2017chance}, \cite{chern2021safe}. The analyses presented by Shah et al. \cite{shah2011probability} and Chern et al. \cite{chern2021safe} give the exact continuous-time collision probability as the solution to a partial differential equation (PDE). Shah et al. \cite{shah2011probability} presents an analytic solution of this PDE for a simple case; namely that of a constrained spherical environment with no internal obstacles. However, such a closed-form solution is generally not tractable for complicated configuration spaces. Frey et al. \cite{frey2020collision} uses an interval-based integration scheme to approximate the collision probability by leveraging classical results in the study of first-exit times. Ariu et al. \cite{ariu2017chance} proposes an upper-bound for the continuous-time risk using the \textit{reflection principle} of Brownian motion and Boole's inequality (\ref{Boole's inequality}). In this paper, we extend the results presented in \cite{ariu2017chance} and derive tighter continuous-time risk bounds.\par
The contributions of this work are summarized as follows: We first use the \textit{Markov property} of Brownian motion, and tighten the risk bound derived in \cite{ariu2017chance}. We then further reduce the conservatism of this bound by leveraging \textit{Hunter's inequality} of the probability of union of events. Both our bounds possess the time-additive structure required in several optimal control techniques (e.g. dynamic programming) \cite{ono2015chance}, \cite{van2012motion}, making these bounds useful for risk-aware motion planning. Finally, using a ground robot navigation example, we demonstrate that our method requires considerably less computation time than the na\"ive Monte Carlo sampling method. We also show that compared to the discrete-time risk bound (\ref{discrete-time Boole}), our bounds are tighter, and at the same time ensure conservatism (i.e. safety).
\section{Preliminaries and Problem Statement}\label{Sec: PROBLEM FORMULATION}
\subsection{Planned Trajectory}
Let $\mathcal{X}_{free}=\mathcal{X}\backslash\mathcal{X}_{obs}$ be the obstacle-free region, and $\mathcal{X}_{goal}\subset \mathcal{X}$ be the target region. We assume that, for an initial position $x_0^{plan}\in\mathcal{X}_{free}$ of the robot, a trajectory planner gives us finite sequences of positions $\{x_j^{plan}\in\mathcal{X}_{free}\}_{j=0, 1, \hdots, N}$ and control inputs $\{v_j^{plan}\in\mathbb{R}^n\}_{j=0, 1, \hdots, N-1}$ such that $x_N^{plan}\in\mathcal{X}_{goal}$. Let $\mathcal{T}=\left(0=t_{0}< t_1<\hdots<t_{N}=T\right)$ be the partition of the time horizon $[0, T]$, with $\Delta t_j=t_{j+1}-t_j$ satisfying \begin{equation}\label{Delta tj}
    v_j^{plan}\Delta t_j=x_{j+1}^{plan}-x_j^{plan}, \qquad j=0, 1, \hdots, N-1.
\end{equation}
The \textit {planned trajectory}, $x^{plan}(t)$, $t\in[0, T]$ is generated by the linear interpolations between $x_{j}^{plan}$ and $x_{j+1}^{plan}$, $(j=0, 1, \hdots, N-1)$. 
\par
\subsection{Robot Dynamics}\label{Sec: Robot Dynamics}
Assume that a robot following the planned path generates a trajectory defined by a random process $\pmb{x}^{sys}(t)$, $t\in[0, T]$ with associated probability space $\left(\Omega, \mathcal{F}, P\right)$. We assume that the process $\pmb{x}^{sys}(t)$ satisfies the following controlled It\^{o} process:
\begin{equation}\label{ito process}
    d\pmb{x}^{sys}(t)=\pmb{v}^{sys}(t)dt+R^{\frac{1}{2}}d\pmb{w}(t),\qquad t\in[0, T]
\end{equation}
with $\pmb{x}^{sys}(0)=x^{plan}_0$. Here, $\pmb{v}^{sys}(t)$ is the velocity input command, $\pmb{w}(t)$ is the $n$-dimensional standard Brownian motion, and $R$ is a given positive definite matrix used to model the process noise intensity. We assume that the robot tracks the planned trajectory in open-loop using a piecewise constant control input:
\begin{equation}\label{v_true(t)}
    \pmb{v}^{sys}(t)=v_j^{plan} \qquad \forall\; t\in[t_j, t_{j+1}).
\end{equation}
 \par
The time discretization of (\ref{ito process}) under $\mathcal{T}$, based on the Euler-Maruyama method \cite{kloeden1992stochastic} yields:
\begin{equation}\label{discretization of ito}
    \pmb{x}^{sys}(t_{j+1})=\pmb{x}^{sys}(t_j)+\pmb{v}^{sys}(t_j)\Delta t_j+\pmb{n}(t_j)
\end{equation}
where $\pmb{n}(t_j)\sim\mathcal{N}(0, \Delta t_jR)$. Introducing $\pmb{x}^{sys}_j\coloneqq \pmb{x}^{sys}(t_j)$, $\pmb{u}^{sys}_j\coloneqq \pmb{v}^{sys}(t_j)\Delta t_j$, $\pmb{n}_j\coloneqq \pmb{n}(t_j)$, and $\Sigma_{\pmb{n}_j}\coloneqq \Delta t_jR$, (\ref{discretization of ito}) can be rewritten as
\begin{equation}\label{xj(true) dynamics}
    \pmb{x}^{sys}_{j+1}=\pmb{x}^{sys}_j+\pmb{u}^{sys}_j+\pmb{n}_j, \qquad \pmb{n}_j\sim\mathcal{N}(0, \Sigma_{\pmb{n}_j}),
\end{equation}
for $j=0, 1, \hdots, N-1$. Further, using (\ref{v_true(t)}) and (\ref{Delta tj}), $\pmb{u}_j^{sys}$ can be rewritten as \begin{equation}\label{uj_true}
 \pmb{u}_j^{sys}=v_j^{plan}\Delta t_j=x_{j+1}^{plan}-x_j^{plan}.   
\end{equation}
\par
Let 
\vspace{-4mm}
\begin{equation}\label{deviation trajectory}
    \pmb{x}(t)\coloneqq \pmb{x}^{sys}(t)-x^{plan}(t),\qquad t\in[0, T]
\end{equation}
be the deviation of the robot from the planned trajectory during trajectory tracking. Defining $\pmb{x}_j\coloneqq \pmb{x}(t_j)$, from (\ref{xj(true) dynamics}), (\ref{uj_true}) and (\ref{deviation trajectory}), the dynamics of $\pmb{x}_j$ can be written as
\begin{equation}\label{xj dynamics}
    \pmb{x}_{j+1}=\pmb{x}_j+\pmb{n}_j, \qquad \pmb{n}_j\sim\mathcal{N}(0, \Sigma_{\pmb{n}_j})
\end{equation}
for $j=0, 1, \hdots, N-1$ with $\pmb{x}_0=0$. The state $\pmb{x}_j$ is distributed as $\pmb{x}_j\sim\mathcal{N}(0, \Sigma_{\pmb{x}_j})$ where $\Sigma_{\pmb{x}_j}$ propagates according to $\Sigma_{\pmb{x}_{j+1}}=\Sigma_{\pmb{x}_j}+\Sigma_{\pmb{n}_j}, j=0, 1, \hdots, N-1$, with the initial covariance $\Sigma_{\pmb{x}_0}=0$.\par
\subsection{Problem Statement}
As explained in Section \ref{Sec: Intro}, the continuous-time end-to-end risk $\mathcal{R}$ over the time horizon $[0, T]$ is formulated as (\ref{continuous risk}). Under $\mathcal{T}$, we reformulate $\mathcal{R}$ as follows:
\begin{equation}\label{union of seg}
\mathcal{R}=\!P\left(\bigcup\limits_{j=1}^{N}\bigcup\limits_{t\in \mathcal{T}_j}\pmb{x}^{sys}(t)\!\in\mathcal{X}_{obs}\right)
\end{equation}
where $\mathcal{T}_j= [t_{j-1}, t_j]$, $j=1, 2, \hdots, N$. In the rest of the paper, we deal with formulation (\ref{union of seg}) in order to derive upper bounds for $\mathcal{R}$.
\subsection{Properties of Brownian Motion} \label{Sec: Known Results of a one-dimensional Brownian motion}
% \begin{definition}[Standard Brownian motion]
% A real-valued stochastic process $\pmb{w}(t)$, $t\geq 0$ is a one-dimensional standard Brownian motion if
% \begin{enumerate}[(i)]
%   \item $\pmb{w}(0)=0$ almost surely,
%   \item $\pmb{w}(t)-\pmb{w}(s)\sim \mathcal{N}\left(0, t-s\right)$, $\forall\;t\geq s\geq0$,
%   \item the process has independent increments, i.e., for all times $0\leq t_1\leq t_2\leq \hdots\leq t_N$, the increments $\pmb{w}(t_N)-\pmb{w}(t_{N-1}),\; \pmb{w}(t_{N-1})-\pmb{w}(t_{N-2}),\; \hdots,\; \pmb{w}(t_2)-\pmb{w}(t_1)$ are independent random variables,
%   \item almost surely $t\rightarrow\pmb{w}(t)$ is continuous.
% \end{enumerate}
% \end{definition}

\begin{definition}[Markov property] \label{TM: Markov property}
Let $\pmb{w}(t)$, $t\geq0$ be an $n$-dimensional Brownian motion started in $z\in \mathbb{R}^n$. Let $s\geq0$, then the process $\widetilde{\pmb{w}}(t)\coloneqq\pmb{w}(t+s)-\pmb{w}(s)$, $t\geq0$ is again a Brownian motion started in the origin and it is independent of the process $\pmb{w}(t)$, $0\leq t\leq s$.

\end{definition}
\begin{theorem}[Reflection principle]\label{TH: reflection principle}
If $\pmb{w}(t)$, $t\geq 0$ is a one-dimensional Brownian motion started in the origin and $d>0$ is a threshold value, then
\begin{equation}\label{reflection}
    P\left(\underset{s\in[0,t]}{\text{sup}}\pmb{w}(s)\geq d\right)=2P\left(\pmb{w}(t)\geq d\right).
\end{equation}
\end{theorem}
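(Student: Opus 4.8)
The plan is to prove the reflection principle by the classical first-passage-time and reflection argument. First I would introduce the first passage time
\[
\tau_d \coloneqq \inf\{s\geq 0 : \pmb{w}(s)=d\}.
\]
Since Brownian paths are continuous and $\pmb{w}(0)=0<d$, the running supremum reaches $d$ before time $t$ if and only if $d$ has been hit by time $t$, so that
\[
\left\{\sup_{s\in[0,t]}\pmb{w}(s)\geq d\right\}=\{\tau_d\leq t\}.
\]
It therefore suffices to establish $P(\tau_d\leq t)=2P(\pmb{w}(t)\geq d)$.

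Next I would split $\{\tau_d\leq t\}$ according to the terminal value. Because $\pmb{w}(t)$ is a continuous random variable we have $P(\pmb{w}(t)=d)=0$, and hence
\[
P(\tau_d\leq t)=P(\tau_d\leq t,\ \pmb{w}(t)>d)+P(\tau_d\leq t,\ \pmb{w}(t)<d).
\]
The crux is to show these two terms are equal by reflecting the path after it first touches $d$. Conditioned on $\tau_d\leq t$, I would consider the post-hitting increment $\widetilde{\pmb{w}}(\cdot)\coloneqq\pmb{w}(\tau_d+\cdot)-d$. Invoking the Markov property (Definition \ref{TM: Markov property}) at the hitting time, $\widetilde{\pmb{w}}$ is a Brownian motion started at the origin, independent of the path up to $\tau_d$; by the symmetry $\widetilde{\pmb{w}}\overset{d}{=}-\widetilde{\pmb{w}}$, the terminal increment $\pmb{w}(t)-d=\widetilde{\pmb{w}}(t-\tau_d)$ is symmetric about zero given $\tau_d$. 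Integrating over the law of $\tau_d$ on $\{\tau_d\leq t\}$ then yields
\[
P(\tau_d\leq t,\ \pmb{w}(t)>d)=P(\tau_d\leq t,\ \pmb{w}(t)<d).
\]
Finally, any path with $\pmb{w}(t)>d$ must by continuity have crossed $d$ earlier, so $\{\pmb{w}(t)>d\}\subseteq\{\tau_d\leq t\}$ and the first term reduces to $P(\pmb{w}(t)>d)=P(\pmb{w}(t)\geq d)$; combining the pieces gives the claim.

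The main obstacle is making the reflection step rigorous, since it requires the independence/symmetry property at the \emph{random} hitting time $\tau_d$ rather than at a fixed time $s$ --- that is, the strong Markov property. Definition \ref{TM: Markov property} as stated provides only the Markov property at a deterministic $s$, so I would bridge the gap either by invoking the strong Markov property directly (legitimate because $\tau_d$ is a stopping time and Brownian motion is strong Markov), or, staying within the stated tools, by approximating $\tau_d$ from above by stopping times $\tau_d^{(n)}$ valued on a dyadic mesh, applying Definition \ref{TM: Markov property} at each mesh point, and passing to the limit via path continuity. Checking the measurability of the reflected path and justifying this limit is where the genuine care is needed.
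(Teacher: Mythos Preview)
Your proposal is correct and is exactly the classical first-passage/reflection argument; the paper itself does not give a proof but simply refers the reader to Durrett and to M\"orters--Peres, where precisely this argument (strong Markov at $\tau_d$, symmetry of the post-hitting increment, then $\{\pmb{w}(t)>d\}\subseteq\{\tau_d\le t\}$) is presented. Your remark that the strong Markov property --- rather than the fixed-time Markov property stated in Definition~\ref{TM: Markov property} --- is what is actually needed, and your suggested dyadic-approximation workaround, are both on point.
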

Refer to \cite{durrett2019probability} and \cite{morters2010brownian} for the proof.
\par
\section{Continuous-Time Risk Analysis}
In this Section, we first reformulate $\mathcal{R}$ in terms of one-dimensional Brownian motions and then use the properties from Section \ref{Sec: Known Results of a one-dimensional Brownian motion} to compute bounds for $\mathcal{R}$. For the analysis in Sections \ref{Sec: Risk Analysis Problem in terms of a one-dimensional Brownian motion} to \ref{Sec: Second-Order Risk Bound}, we assume that $\mathcal{X}_{obs}$ is convex. In Section \ref{Sec: multi polyhedrons}, we explain how the analysis can be generalized when $\mathcal{X}_{obs}$ is non-convex. 
\subsection{$\mathcal{R}$ in terms of One-Dimensional Brownian Motions}\label{Sec: Risk Analysis Problem in terms of a one-dimensional Brownian motion}
Let $\mathcal{S}_j$ be the path segment connecting $x_{j-1}^{plan}$ and $x_{j}^{plan}$ or equivalently, $x^{plan}(t_{j-1})$ and $x^{plan}(t_j)$, $j=1, 2, \hdots, N$. Now, we conservatively approximate $\mathcal{X}_{obs}$ with a half space, similar to \cite{morgan2014model}, \cite{zhu2019chance}. Since $\mathcal{S}_j$ and $\mathcal{X}_{obs}$ are convex, bounded and disjoint subsets of $\mathbb{R}^n$, from the hyperplane separation theorem, we can guarantee the existence of a hyperplane that strictly separates $\mathcal{S}_j$ and $\mathcal{X}_{obs}$. Let $\mathcal{H}_j: a_j^Tx-b_j=0$, $a_j\in\mathbb{R}^n$, $b_j\in\mathbb{R}$, $\|a_j\|=1$ be a hyperplane such that $\mathcal{X}_{obs}\subseteq \mathcal{H}_j^+$ and $\mathcal{S}_j\subset\mathcal{H}_j^-$ where the half spaces $\mathcal{H}_j^+$ and $\mathcal{H}_j^-$ are defined as
\begin{equation}\label{Hj+-}
    \mathcal{H}_j^+\coloneqq\{x\in\mathbb{R}^n: a_j^Tx-b_j\geq0\}, \quad \mathcal{H}_j^-\coloneqq\mathbb{R}^n\backslash\mathcal{H}_j^+.
\end{equation}  
 Since $\mathcal{H}_j^+$ is a conservative approximation of $\mathcal{X}_{obs}$, we can upper bound $\mathcal{R}$ in (\ref{union of seg}) as
\begin{equation}\label{R in H_j^+}
\mathcal{R}\leq\!P\left(\bigcup\limits_{j=1}^{N}\bigcup\limits_{t\in \mathcal{T}_j}\pmb{x}^{sys}(t)\!\in\mathcal{H}_j^+\right).
\end{equation}
To find a least conservative upper bound, each hyperplane $\mathcal{H}_j$ can be constructed using the solution ($y_1^*$, $y_2^*$) to the following optimization problem:
\begin{equation}\label{optimization problem}
\begin{aligned}
\min_{y_1, y_2\in\mathbb{R}^n} \quad & \|y_1-y_2\|\\
\textrm{s.t.} \quad & y_1\in \mathcal{X}_{obs}, \; y_2\in \mathcal{S}_j.
\end{aligned}    
\end{equation}
The least conservative hyperplane $\mathcal{H}_j$ will be perpendicular to the line segment connecting $y_1^*$ and $y_2^*$, and passing through $y_1^*$. If $d_j\coloneqq\|y_1^*-y_2^*\|$, then $d_j$ represents the minimum distance of $\mathcal{S}_j$ from $\mathcal{X}_{obs}$. Fig. \ref{Fig. conversion to 1D} shows an example of an optimal hyperplane $\mathcal{H}_j$ for a given $\mathcal{X}_{obs}$ and $\mathcal{S}_j$. 
\begin{figure}[t]
    \centering
    \includegraphics[scale=0.5]{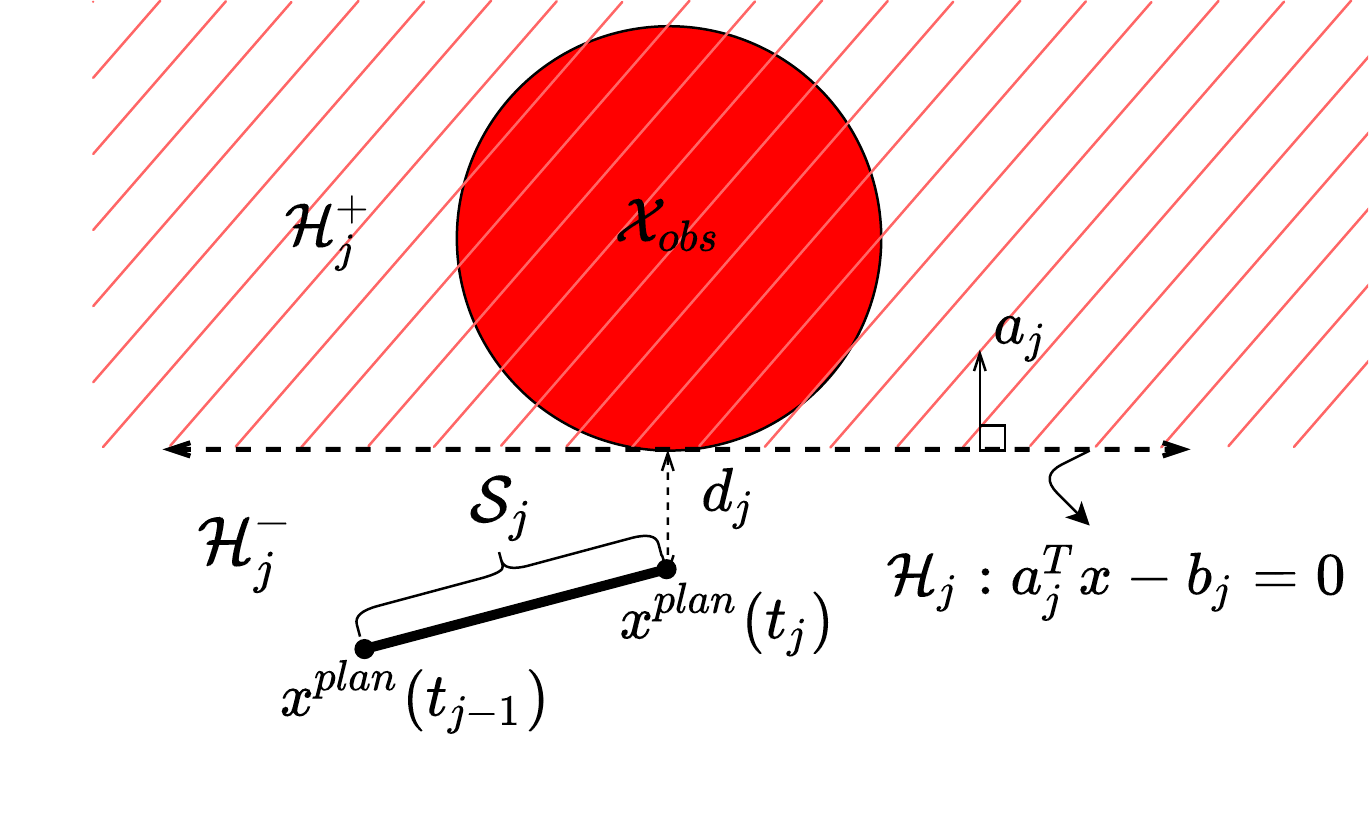}
    \caption{The least conservative hyperplane $\mathcal{H}_j$ approximating $\mathcal{X}_{obs}$ (shown in a red-faced circle) with a half space $\mathcal{H}_j^+$ (shown in red hatching). $d_j$ is the minimum distance of $\mathcal{S}_j$ from $\mathcal{X}_{obs}$.}
    \label{Fig. conversion to 1D}
\end{figure}\par
Now, it can be shown that  
\begin{equation}\label{Hj in d_j}
\left(\bigcup\limits_{t\in\mathcal{T}_j}\pmb{x}^{sys}(t)\in \mathcal{H}_j^+\!\! \right)\subseteq\left(\bigcup\limits_{t\in\mathcal{T}_j}a_j^T\pmb{x}(t)\geq d_j\!\right)
\end{equation}
where $\pmb{x}(t)$ is the deviation of the robot from the planned trajectory as defined in (\ref{deviation trajectory}). Proof of (\ref{Hj in d_j}) is presented in Appendix A. Using (\ref{R in H_j^+}) and (\ref{Hj in d_j}), $\mathcal{R}$ can be upper-bounded as 
\begin{equation}\label{R in dj}
\mathcal{R}\leq\!P\left(\bigcup\limits_{j=1}^{N}\bigcup\limits_{t\in \mathcal{T}_j}a_j^T\pmb{x}(t)\geq d_j\right).
\end{equation}
For the proposed robot dynamics (Section \ref{Sec: Robot Dynamics}), it is trivial to show that $a_j^T\pmb{x}(t)$ is a one-dimensional Brownian motion for $t\in[0, T]$ that starts in the origin. Let us denote $\pmb{w}_j(t)\coloneqq a_j^T\pmb{x}(t)$, $j=1, 2, \hdots, N$. Now, (\ref{R in dj}) can be written as 
\begin{equation}\label{R in Bj}
    \mathcal{R}\leq P\left(\bigcup\limits_{j=1}^{N}\;\underset{t\in\mathcal{T}_j}{\text{max}}\;\pmb{w}_j(t)\geq d_j\right).
\end{equation}
Defining $\mathcal{E}_j\coloneqq\left(\underset{t\in\mathcal{T}_j}{\text{max}}\;\pmb{w}_j(t)\geq d_j\right)$, (\ref{R in Bj}) can be rewritten as
\begin{equation}\label{R in Ej}
    \mathcal{R}\leq P\left(\bigcup\limits_{j=1}^{N}\mathcal{E}_j\right).
\end{equation}
Since $\{\mathcal{E}_j\}_{j=1, 2, \hdots, N}$ are non-independent events, computing (\ref{R in Ej}) exactly is a challenging task. In the following sections, we derive bounds for $P\left(\bigcup\limits_{j=1}^{N}\mathcal{E}_j\right)$. 
\subsection{First-Order Risk Bound}\label{Sec: First-Order Risk Bound}
Define $p_j\coloneqq P(\mathcal{E}_j)=P\left( \underset{t\in[t_{j-1}, t_j]}{\text{max}}\pmb{w}_j(t)\geq d_j\right)$. Applying Boole's inequality (\ref{Boole's inequality}), the probability in (\ref{R in Ej}) can be decomposed as 
\begin{equation}\label{1st order bound}
\mathcal{R}\leq P\left(\bigcup\limits_{j=1}^{N}\mathcal{E}_j\right)\leq\sum\limits_{j=1}^{N}p_j.
\end{equation}
This gives us a first-order risk bound for $\mathcal{R}$. $p_j$ is the continuous-time risk associated with the time segment $\mathcal{T}_j=[t_{j-1}, t_j]$. Note that the bound in (\ref{1st order bound}) possesses the time-additive structure which is helpful to use this bound in the risk-aware motion planning algorithms. \par
In order to take advantage of the reflection principle to compute $p_j$, Ariu et al. \cite{ariu2017chance} proposes to compute an upper bound to $p_j$ as
\begin{equation}\label{pj Kaito}
    p_j\leq P\left(\underset{t\in[0, t_j]}{\text{max}}\,\pmb{w}_j(t)\geq d_j\right).
\end{equation}
Using the reflection principle (\ref{reflection}), the right side of (\ref{pj Kaito}) can be evaluated as
\begin{equation}\label{Kaito reflection}
 \!\!P\!\left(\!\underset{t\in[0, t_j]}{\text{max}}\!\pmb{w}_j(t)\!\geq\!d_j\!\!\right)\!\!=\! 2P\!\left(\pmb{w}_j(t_j)\!\geq\! d_j\!\right)\!=\!2P\!\left(\!a_j^T\pmb{x}_j\!\geq\!d_j\!\right)\!.
\end{equation}
From (\ref{1st order bound}), (\ref{pj Kaito}), and (\ref{Kaito reflection}) we get
\begin{equation}\label{Kaito's bound}
\begin{aligned}
    \mathcal{R}\leq2\sum\limits_{j=1}^{N}P\left(a_j^T\pmb{x}_j\geq d_j\right).
\end{aligned}
\end{equation}
The bound in (\ref{Kaito's bound}) requires computing probabilities only at the discrete-time steps, simplifying the estimation of the continuous-time risk. However, the over-approximation in (\ref{pj Kaito}) introduces unnecessary conservatism that can be avoided using the Markov property of Brownian motion. Next, we present a way by which $p_j$ can be computed exactly without any over-approximation. \par
For notational convenience, let us denote the random variables $\pmb{w}_j(t_{j-1})$ and $\pmb{w}_j(t_{j})$ by $\pmb{z}^{s}_j$ and $\pmb{z}^{e}_j$ respectively:
\begin{equation}\label{z_js, z_je}
    \pmb{z}^{s}_j\coloneqq \pmb{w}_j(t_{j-1})=a_j^T\pmb{x}_{j-1},\quad \pmb{z}^{e}_j\coloneqq \pmb{w}_j(t_{j})=a_j^T\pmb{x}_{j}
\end{equation}
for $j=1, 2, \hdots, N$. If $\mu_{\pmb{\xi}}(\xi)$ denotes the probability density function (p.d.f.) of any random variable $\pmb{\xi}$, then  
\begin{equation}\label{mu of zjs, zje}
\begin{aligned}
\mu_{\pmb{z}^{s}_j}({z}^{s}_j)=\mathcal{N}(0, \sigma_{\pmb{z}^{s}_j}^2),& \qquad \sigma_{\pmb{z}^{s}_j}^2=a_j^T\Sigma_{\pmb{x}_{j-1}}a_j,\\
\mu_{\pmb{z}^{e}_j}({z}^{e}_j)=\mathcal{N}(0, \sigma_{\pmb{z}^{e}_j}^2),& \qquad \sigma_{\pmb{z}^{e}_j}^2=a_j^T\Sigma_{\pmb{x}_j}a_j.
\end{aligned}
\end{equation}
Let us define $\pmb{z}_j\coloneqq\begin{bmatrix}\pmb{z}^{s}_j & \pmb{z}^{e}_j\end{bmatrix}^T\in\mathbb{R}^2$. It is straightforward to show that the joint distribution of $\pmb{z}_j$ is
\begin{equation}\label{pdf of z_j}
\mu_{\pmb{z}_j}({z}_j)=\mathcal{N}\left(0, \Sigma_{\pmb{z}_j}\right),\qquad\Sigma_{\pmb{z}_j}=\begin{bmatrix}
    \sigma_{\pmb{z}^{s}_j}^2 & \sigma_{\pmb{z}^{s}_j}^2\\
   \sigma_{\pmb{z}^{s}_j}^2 & \sigma_{\pmb{z}^{e}_j}^2
    \end{bmatrix}.
\end{equation}
\par
Now, we compute $p_j$ using the following theorem: 
\begin{theorem}\label{TH: pj}
If $\mu_{\pmb{z}^{s}_j}({z}^{s}_j)$ and $\mu_{\pmb{z}_j}({z}_j)$ are the distributions of the normal random variables $\pmb{z}^{s}_j$ and $\pmb{z}_j$, represented as (\ref{mu of zjs, zje}) and (\ref{pdf of z_j}) respectively, then $p_j$ is given by:
\begin{equation}\label{theorem pj}
\begin{aligned}
   p_j\!=\!\!\!\int_{{z}_j^s=d_j}^{\infty}\!\!\!\!\!\mu_{\pmb{z}^{s}_j}({z}^{s}_j)d{z}_j^s\!+2\!\!\int_{{z}_j^s=-\infty}^{d_j}\!\int_{{z}_j^e=d_j}^{\infty}\!\!\!\!\mu_{\pmb{z}_j}({z}_j)d{z}_j^{e}d{z}_j^{s}.
\end{aligned}
\end{equation}
\end{theorem}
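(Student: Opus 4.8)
The plan is to decompose the event $\mathcal{E}_j=\{\max_{t\in\mathcal{T}_j}\pmb{w}_j(t)\geq d_j\}$ according to the value of the process at the left endpoint of the segment, $\pmb{z}^{s}_j=\pmb{w}_j(t_{j-1})$, relative to the threshold $d_j$. Writing $\mathcal{E}_j$ as the disjoint union of $\mathcal{E}_j\cap\{\pmb{z}^{s}_j\geq d_j\}$ and $\mathcal{E}_j\cap\{\pmb{z}^{s}_j< d_j\}$, I would treat these two pieces separately. On the first event the path already sits on or above $d_j$ at time $t_{j-1}$, so the running maximum on $\mathcal{T}_j$ is automatically at least $d_j$; hence $P(\mathcal{E}_j\cap\{\pmb{z}^{s}_j\geq d_j\})=P(\pmb{z}^{s}_j\geq d_j)$, which is exactly the first single integral in (\ref{theorem pj}).

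For the second piece I would condition on $\pmb{z}^{s}_j=z^{s}_j<d_j$ and invoke the Markov property (Definition \ref{TM: Markov property}). The shifted process $\widetilde{\pmb{w}}(\tau)\coloneqq\pmb{w}_j(t_{j-1}+\tau)-\pmb{w}_j(t_{j-1})$, $\tau\in[0,\Delta t_j]$, is a fresh one-dimensional Brownian motion started at the origin and independent of $\pmb{z}^{s}_j$. The event that the maximum of $\pmb{w}_j$ over $\mathcal{T}_j$ reaches $d_j$ is then equivalent to $\max_{\tau\in[0,\Delta t_j]}\widetilde{\pmb{w}}(\tau)\geq d_j-z^{s}_j$, a first-passage event for Brownian motion to the strictly positive level $d_j-z^{s}_j$. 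Applying the reflection principle (Theorem \ref{TH: reflection principle}) to $\widetilde{\pmb{w}}$ gives the conditional probability $P(\max_{\tau}\widetilde{\pmb{w}}(\tau)\geq d_j-z^{s}_j)=2P(\widetilde{\pmb{w}}(\Delta t_j)\geq d_j-z^{s}_j)$.

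The remaining task is to re-express this in terms of the joint law of $\pmb{z}_j=(\pmb{z}^{s}_j,\pmb{z}^{e}_j)$. Since $\widetilde{\pmb{w}}(\Delta t_j)=\pmb{z}^{e}_j-\pmb{z}^{s}_j$ and this increment is independent of $\pmb{z}^{s}_j$, the event $\{\widetilde{\pmb{w}}(\Delta t_j)\geq d_j-z^{s}_j\}$ becomes $\{\pmb{z}^{e}_j\geq d_j\}$ once the starting value is fixed, and the joint density factors as $\mu_{\pmb{z}_j}(z^{s}_j,z^{e}_j)=\mu_{\pmb{z}^{s}_j}(z^{s}_j)\,\mu_{\widetilde{\pmb{w}}(\Delta t_j)}(z^{e}_j-z^{s}_j)$. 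Integrating the conditional probability against $\mu_{\pmb{z}^{s}_j}$ over $z^{s}_j\in(-\infty,d_j)$ and performing the change of variable $w=z^{e}_j-z^{s}_j$ in the inner integral then collapses the product of densities into $\mu_{\pmb{z}_j}$, yielding the double integral with the prefactor $2$ in (\ref{theorem pj}). Summing the two pieces completes the proof.

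I expect the main obstacle to be the bookkeeping at the reflection step: the level to which the fresh Brownian motion must climb is shifted by $z^{s}_j$, so the reflection principle must be applied pathwise for each conditioned starting value and then integrated, rather than applied once to the original process as in the looser bound (\ref{pj Kaito})--(\ref{Kaito reflection}). Care is also needed to confirm that the covariance $\mathrm{Cov}(\pmb{z}^{s}_j,\pmb{z}^{e}_j)=\sigma^2_{\pmb{z}^{s}_j}$ assumed in (\ref{pdf of z_j}) is precisely what makes the factorization of $\mu_{\pmb{z}_j}$ consistent with the independent-increment structure used above.
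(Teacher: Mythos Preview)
Your proposal is correct and follows essentially the same route as the paper: the paper also splits $p_j$ according to $\{\pmb{z}_j^{s}\geq d_j\}$ versus $\{\pmb{z}_j^{s}<d_j\}$, handles the first piece by inclusion, and for the second piece conditions on $\pmb{z}_j^{s}$, applies the Markov property to obtain the shifted Brownian motion $\widetilde{\pmb{w}}_j$, invokes the reflection principle at the level $d_j-z_j^{s}$, and then performs the change of variable $y_j=z_j^{e}-z_j^{s}$ to recover the bivariate normal density $\mu_{\pmb{z}_j}$ with $\rho=\sigma_{\pmb{z}_j^{s}}/\sigma_{\pmb{z}_j^{e}}$. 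The only cosmetic difference is that the paper writes out the bivariate density explicitly before recognizing it as $\mu_{\pmb{z}_j}$, whereas you argue the factorization directly from independence of increments; both are the same computation.
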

\begin{proof}
Let us define:
\begin{equation*}
   \begin{aligned}
    p_j^1\coloneqq P\left( \underset{t\in[t_{j-1}, t_j]}{\text{max}}\pmb{w}_j(t)\geq d_j,\; \pmb{w}_j(t_{j-1})\geq d_j\right),\\
    p_j^2\coloneqq P\left( \underset{t\in[t_{j-1}, t_j]}{\text{max}}\pmb{w}_j(t)\geq d_j,\; \pmb{w}_j(t_{j-1})< d_j\right).
   \end{aligned}
\end{equation*}
Using the law of total probability, we can write $p_j$ as
\begin{equation}\label{split pj}
p_j=p_j^1+p_j^2.
\end{equation}
Since $\left(\pmb{w}_j(t_{j-1})\geq d_j\right)\subseteq \left(\underset{t\in[t_{j-1}, t_j]}{\text{max}}\pmb{w}_j(t)\geq d_j\right)$, $p_j^1$ can be computed as
\begin{equation}\label{final pj1}
    \begin{aligned}
        p_j^1=P\left(\pmb{w}_j(t_{j-1})\geq d_j\right)=\int_{{z}_j^s=d_j}^{\infty}\mu_{\pmb{z}^{s}_j}({z}^{s}_j)d{z}_j^s.
    \end{aligned}
\end{equation}
Now, we write $p_j^2$ as 
\begin{equation*}
    \begin{aligned}
        p_j^2=&P\left( \underset{t\in[t_{j-1}, t_j]}{\text{max}}\pmb{w}_j(t)\geq d_j,\; \pmb{z}_j^s< d_j\right)\\
        =&P\left(\underset{t\in[0, t_j-t_{j-1}]}{\text{max}}\pmb{w}_j(t+t_{j-1})\geq d_j,\;\pmb{z}_j^s<d_j\right).
    \end{aligned}
\end{equation*}
From Markov property of Brownian motion (Definition \ref{TM: Markov property}), 
\begin{equation}\label{B_tilde}
    \widetilde{\pmb{w}}_j(t)=\pmb{w}_j(t+t_{j-1})-\pmb{w}_j(t_{j-1}), \quad t\in[0, (T-t_{j-1})]
\end{equation}
is a one-dimensional Brownian motion that starts in the origin. Rewriting $p_j^2$ in terms of $\widetilde{\pmb{w}}_j(t)$, we get
\begin{equation}\label{E_j in B_tilde_j}
    \begin{aligned}
    p_j^2&=P\left(\underset{t\in[0, t_j-t_{j-1}]}{\text{max}}\widetilde{\pmb{w}}_j(t)\geq d_j-\pmb{z}_j^{s},\; \pmb{z}_j^{s}<d_j\right)\\
&=\!\!\!\int_{-\infty}^{d_j}\!\!\!P\!\left(\underset{t\in[0, t_j-t_{j-1}]}{\text{max}}\!\widetilde{\pmb{w}}_j(t)\!\geq\! d_j-{z}_j^{s}\right)\mu_{\pmb{z}^{s}_j}({z}^{s}_j)d{z}_j^{s}.
    \end{aligned}
\end{equation}
Since $d_j-{z}_j^{s}>0$, $\forall\,{z}_j^s\in(-\infty, d_j)$, we can apply the reflection principle (\ref{reflection}) and rewrite (\ref{E_j in B_tilde_j}) as
\begin{equation}\label{E_j in B_tilde_j2}
\begin{aligned}
p_j^2\!=\!\!\int_{-\infty}^{d_j}\!\!\!2P\!\left(\widetilde{\pmb{w}}_j(t_j\!-\!t_{j-1})\geq d_j\!-\!{z}_j^{s}\right)\mu_{\pmb{z}^{s}_j}({z}^{s}_j)d{z}_j^{s}.
\end{aligned}
\end{equation}
Let us denote the random variable $\widetilde{\pmb{w}}_j(t_j-t_{j-1})$ by $\pmb{y}_j$. Using (\ref{B_tilde}) and (\ref{z_js, z_je}), 
\begin{equation*}
    \pmb{y}_j\coloneqq \widetilde{\pmb{w}}_j(t_j-t_{j-1})=\pmb{w}_j(t_j)-\pmb{w}_j(t_{j-1})=\pmb{z}_j^{e}-\pmb{z}_j^{s},
\end{equation*}
and the p.d.f. of $\pmb{y}_j$ is $\mu_{\pmb{y}_j}({y}_j)=\mathcal{N}(0, \sigma_{\pmb{y}_j}^2)$ where $\sigma_{\pmb{y}_j}^2=\sigma_{\pmb{z}^{e}_j}^2-\sigma_{\pmb{z}^{s}_j}^2$.
Now, (\ref{E_j in B_tilde_j2}) can be rewritten as 
\begin{equation}\label{pj|Fcj}
\begin{aligned}
    p_j^2&=2\int_{-\infty}^{d_j}\left(\int_{d_j-{z}_j^{s}}^{\infty}\mu_{\pmb{y}_j}({y}_j) \,d{y}_j\right)\mu_{\pmb{z}^{s}_j}({z}^{s}_j)d{z}_j^{s}\\
    &\!\!\!\!\!\!\!\!\!\!\!\!\!\!=\!2\!\!\!\int_{\!\!-\infty}^{d_j}\!\!\int_{\!d_j-{z}_j^{s}}^{\infty}\!\frac{1}{2\pi\sigma_{\pmb{z}_j^{s}}\sigma_{\pmb{y}_j}}exp\!\!\left\{\!\!-\frac{1}{2}\!\!\left(\!\!\frac{{z}_j^s}{\sigma_{\pmb{z}_j^s}}\!\!\right)^{\!\!\!2}\!\!-\!\frac{1}{2}\!\!\left(\!\!\frac{{y}_j}{\sigma_{\pmb{y}_j}}\!\!\right)^{\!\!\!2}\!\!\right\}\!\!d{y}_jd{z}_j^{s}.
\end{aligned}     
\end{equation}
The outside integral in right side of (\ref{pj|Fcj}) is w.r.t. ${z}_j^s$ and the inside is one is w.r.t. ${y}_j$. Substituting ${y}_j$ with ${z}_j^{e}-{z}_j^{s}$, (\ref{pj|Fcj}) can be rewritten as 
\begin{equation}\label{bivariate integral}
\begin{aligned}
    p_j^2=&2\!\!\int_{{z}_j^s=-\infty}^{d_j}\!\int_{{z}_j^e=d_j}^{\infty}\frac{1}{2\pi\sigma_{\pmb{z}_j^{s}}\sigma_{\pmb{z}_j^{e}}\sqrt{1-\rho^2}}\cdot\\
    &\!\!\!\!\!\!\!\!\!\!\!\!\!\!\!\!\!\!\!\!\!\!\!\!exp\!\left\{\!\!-\frac{1}{2\!\left(1\!-\!\rho^2\right)}\!\!\left[\left(\!\!\frac{{z}_j^s}{\sigma_{\pmb{z}_j^s}}\!\!\right)^{\!\!\!2}\!\!-\frac{2\rho\,{z}_j^s{z}_j^e}{\sigma_{\pmb{z}_j^s}\sigma_{\pmb{z}_j^e}}+\!\!\left(\!\!\frac{{z}_j^e}{\sigma_{\pmb{z}_j^e}}\!\!\right)^{\!\!\!2}\right]\right\}d{z}_j^{e}\,d{z}_j^{s}
\end{aligned}
\end{equation}
where $\rho=\sigma_{\pmb{z}_j^s}/\sigma_{\pmb{z}_j^e}$. The expression inside the double integral of (\ref{bivariate integral}) is a bivariate normal distribution of $\pmb{z}_j$. Hence,
\begin{equation}\label{final pj2}
p_j^2=2\!\!\int_{{z}_j^s=-\infty}^{d_j}\int_{{z}_j^e=d_j}^{\infty}\mu_{\pmb{z}_j}({z}_j)d{z}_j^{e}\,d{z}_j^{s}.    
\end{equation}
Combining (\ref{split pj}), (\ref{final pj1}) and (\ref{final pj2}) we recover (\ref{theorem pj}).
\end{proof}\par
 MATLAB's \texttt{mvncdf} function can be utilized to compute the integrations (\ref{final pj1}) and (\ref{final pj2}) numerically.\par
\subsection{Second-Order Risk Bound}\label{Sec: Second-Order Risk Bound}
The proposed first-order risk bound (\ref{1st order bound}) can be tightened using a variant of Hunter's inequality that additionally considers the joint probability of consecutive events \cite{prekopa2003probabilistic}:
\begin{equation*}\label{2nd order risk bound}
\mathcal{R}\leq P\left(\bigcup\limits_{j=1}^{N}\mathcal{E}_j\right)\leq \sum\limits_{j=1}^{N} p_j-\sum\limits_{j=1}^{N-1} p_{j,\,j+1}
\end{equation*}
where $ p_{j,\,j+1}\coloneqq P(\mathcal{E}_j\cap \mathcal{E}_{j+1})$ is the joint risk associated with the time segments $\mathcal{T}_j$ and $\mathcal{T}_{j+1}$. Computing $p_{j,\,j+1}$ exactly is challenging. In this work, we propose to compute a lower bound $p_{j,\,j+1}^{LB}$ of $p_{j,\,j+1}$ using the following theorem:
\vspace{1mm}
\begin{theorem}\label{lower bound to pjk}
If $t_{j-1}= \hat{t}_j^0<\hat{t}_j^1<\hdots<\hat{t}_j^{r_j}=t_{j}$ is a discretization of the time segment $\mathcal{T}_j$, and $\pmb{z}_j^{i}$, $\mathcal{D}_j$ are defined as
\begin{equation*}
    \pmb{z}_j^{i}\coloneqq \pmb{w}_j(\hat{t}_j^i)= a_j^T\pmb{x}(\hat{t}_j^i),
\end{equation*}
\begin{equation*}\label{Dj in Bj}
\mathcal{D}_j\coloneqq\left(\pmb{z}_j^0<d_j\right)\cap\left(\pmb{z}_j^1<d_j\right)\cap\hdots\cap\left(\pmb{z}_j^{r_j}<d_j\right),
\end{equation*}
then $p_{j,\,j+1}$ is lower bounded by $p_{j,\,j+1}^{LB}$ given as
\begin{equation*}\label{pjkD}
p_{j,\,j+1}^{LB}=1-P(\mathcal{D}_j)-P(\mathcal{D}_{j+1})+P(\mathcal{D}_j\cap \mathcal{D}_{j+1}).
\end{equation*}
\end{theorem}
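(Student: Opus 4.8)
The plan is to express the claimed lower bound as the probability of an event that is provably contained in $\mathcal{E}_j\cap\mathcal{E}_{j+1}$, and then evaluate that probability exactly by elementary set algebra. The guiding observation is that the discretized event $\mathcal{D}_j$ is a \emph{relaxation} of the continuous-time non-exceedance event $\mathcal{E}_j^c$: demanding that $\pmb{w}_j$ stay below $d_j$ only at the finitely many sample times $\hat{t}_j^0,\hdots,\hat{t}_j^{r_j}$ is a weaker requirement than demanding $\pmb{w}_j(t)<d_j$ for every $t\in\mathcal{T}_j$. This forces the set inclusions to run in the conservative direction needed for a genuine lower bound.

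First I would establish the inclusion $\mathcal{E}_j^c\subseteq\mathcal{D}_j$. Since $\mathcal{E}_j=\left(\max_{t\in\mathcal{T}_j}\pmb{w}_j(t)\geq d_j\right)$, its complement is $\mathcal{E}_j^c=\left(\max_{t\in\mathcal{T}_j}\pmb{w}_j(t)<d_j\right)=\bigcap_{t\in\mathcal{T}_j}\left(\pmb{w}_j(t)<d_j\right)$. Because the sample times $\{\hat{t}_j^i\}_{i=0}^{r_j}$ all lie in $\mathcal{T}_j$, any sample path satisfying $\pmb{w}_j(t)<d_j$ for all $t\in\mathcal{T}_j$ in particular satisfies $\pmb{z}_j^i=\pmb{w}_j(\hat{t}_j^i)<d_j$ for every $i$, i.e.\ it lies in $\mathcal{D}_j$. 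Taking complements yields the equivalent and more directly useful containment $\mathcal{D}_j^c\subseteq\mathcal{E}_j$.

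Applying this to both segments gives $\mathcal{D}_j^c\cap\mathcal{D}_{j+1}^c\subseteq\mathcal{E}_j\cap\mathcal{E}_{j+1}$, so monotonicity of probability yields $p_{j,\,j+1}=P(\mathcal{E}_j\cap\mathcal{E}_{j+1})\geq P(\mathcal{D}_j^c\cap\mathcal{D}_{j+1}^c)$. It then remains only to rewrite the right-hand side: by De Morgan's law $\mathcal{D}_j^c\cap\mathcal{D}_{j+1}^c=(\mathcal{D}_j\cup\mathcal{D}_{j+1})^c$, and the two-set inclusion--exclusion identity gives $P\big((\mathcal{D}_j\cup\mathcal{D}_{j+1})^c\big)=1-P(\mathcal{D}_j)-P(\mathcal{D}_{j+1})+P(\mathcal{D}_j\cap\mathcal{D}_{j+1})$, which is exactly $p_{j,\,j+1}^{LB}$.

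I expect the only genuinely delicate step to be the containment $\mathcal{E}_j^c\subseteq\mathcal{D}_j$; everything after it is routine set manipulation. The subtlety there is purely that of relating a continuum of constraints to a finite subsample of them, and the argument hinges on the sample points being a subset of the continuous segment $\mathcal{T}_j$ so that the inclusion points the ``safe'' way, guaranteeing an honest lower bound rather than a mere approximation. For the subsequent implementation one would note that $P(\mathcal{D}_j)$, $P(\mathcal{D}_{j+1})$, and $P(\mathcal{D}_j\cap\mathcal{D}_{j+1})$ are orthant probabilities of jointly Gaussian vectors built from the Brownian increments, hence computable via \texttt{mvncdf}; but establishing the stated inequality requires none of that numerical content.
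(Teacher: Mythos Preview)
Your proposal is correct and follows essentially the same argument as the paper: the paper introduces $\mathcal{C}_j\coloneqq\mathcal{D}_j^c$, asserts $\mathcal{C}_j\subset\mathcal{E}_j$, and then runs the same De~Morgan plus inclusion--exclusion computation you describe. Your write-up is in fact a bit more explicit in justifying the key inclusion $\mathcal{D}_j^c\subseteq\mathcal{E}_j$ via the complementary statement $\mathcal{E}_j^c\subseteq\mathcal{D}_j$, which the paper leaves implicit.
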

\vspace{2mm}
\begin{proof}
Introduce $\mathcal{C}_j$ as 
\begin{equation*}
\begin{aligned}
 \mathcal{C}_j=&\left(\pmb{w}_j(\hat{t}_j^0)\geq d_j\right)\cup\left(\pmb{w}_j(\hat{t}_j^1)\geq d_j\right)\cup\hdots\cup\left(\pmb{w}_j(\hat{t}_j^{r_j})\geq d_j\right)\\
=&\left(\pmb{z}_j^0\geq d_j\right)\cup\left(\pmb{z}_j^1\geq d_j\right)\cup\hdots\cup\left(\pmb{z}_j^{r_j}\geq d_j\right)\\
=&\mathcal{D}_j^c.
\end{aligned}    
\end{equation*}
Now, since $\mathcal{C}_j\subset \mathcal{E}_j$
\begin{equation*}
\begin{aligned}
    p_{j,\,j+1}&\geq P(\mathcal{C}_j\cap \mathcal{C}_{j+1})\\
    &=1-P(\mathcal{D}_j\cup \mathcal{D}_{j+1})\\
    &=1-P(\mathcal{D}_j)-P(\mathcal{D}_{j+1})+P(\mathcal{D}_j\cap \mathcal{D}_{j+1})\\
    &=p_{j,\,j+1}^{LB}.
\end{aligned}
\end{equation*}
\end{proof}
$P(\mathcal{D}_j)$ can be computed by finding the joint distribution of $\begin{bmatrix} \pmb{z}_j^{0}& \pmb{z}_j^{1}& \hdots &\pmb{z}_j^{r_j} \end{bmatrix}^T$ and $P(\mathcal{D}_j\cap \mathcal{D}_{j+1})$ by finding the joint distribution of $\begin{bmatrix} \pmb{z}_j^{0}& \pmb{z}_j^{1}& \hdots &\pmb{z}_j^{r_j}& \pmb{z}_{j+1}^{0}& \pmb{z}_{j+1}^{1}& \hdots &\pmb{z}_{j+1}^{r_{j+1}} \end{bmatrix}^T$. The computations of $P(\mathcal{D}_j)$ and $P(\mathcal{D}_j\cap \mathcal{D}_{j+1})$ are summarized in Appendix B. Now, we get our second-order risk bound as follows: 
\begin{equation} \label{2nd order risk bound in pL}
\mathcal{R}\leq \sum\limits_{j=1}^{N} p_j-\sum\limits_{j=1}^{N-1} p_{j,\,j+1}^{LB}.
\end{equation}
Similar to the first-order risk bound (\ref{1st order bound}), this bound also possesses the time-additive structure. Note that the higher sampling rates we choose to discretize the time segments $\mathcal{T}_j$ (a set of higher $r_j$'s), the tighter the bound in (\ref{2nd order risk bound in pL}) becomes. \par

% In principle, we can compute risk bounds of order higher than two using higher-order probability inequalities proposed in \cite{bukszar2001probability}, \cite{prekopa1988boole}, \cite{prekopa2003probabilistic}. Of course, higher-order bounds will be associated with higher computational complexities. 

\subsection{Risk Analysis when $\mathcal{X}_{obs}$ is Non-Convex}\label{Sec: multi polyhedrons}
As mentioned earlier, the analysis in Sections \ref{Sec: Risk Analysis Problem in terms of a one-dimensional Brownian motion} to \ref{Sec: Second-Order Risk Bound} assumes that $\mathcal{X}_{obs}$ is convex, which is sufficient to guarantee the existence of a set of separating hyperplanes $\{\mathcal{H}_j\}_{j=1, 2, \hdots, N}$. When $\mathcal{X}_{obs}$ is non-convex, we partition it into $M$ subregions $\mathcal{X}_{obs_m}$, $m=1, 2, \hdots, M$ such that $\mathcal{X}_{obs}=\left(\bigcup\limits_{m=1}^{M}\mathcal{X}_{obs_m}\right)$ and a set of separating hyperplanes $\{\mathcal{H}_j\}_{j=1, 2, \hdots, N}$ exists for each $\mathcal{X}_{obs_m}$. We then bound $\mathcal{R}$ as 
\begin{equation*}
\mathcal{R}\leq\sum\limits_{m=1}^{M}\mathcal{R}_m,\qquad \mathcal{R}_m=P\left(\bigcup\limits_{t\in[0, T]}\pmb{x}^{sys}(t)\in\mathcal{X}_{obs_m}\right).   
\end{equation*}
The first and second-order upper bounds for $\mathcal{R}_m$ can be computed using the analysis in Sections \ref{Sec: Risk Analysis Problem in terms of a one-dimensional Brownian motion} to \ref{Sec: Second-Order Risk Bound}. In order to obtain tight upper-bounds for $\mathcal{R}$, the partitioning of $\mathcal{X}_{obs}$ can be optimized which is left for the future work. 
\section{Simulation Results}\label{Sec: example}
In this Section, we demonstrate  the validity and performance of our continuous-time risk bounds via a ground robot navigation simulation. The configuration space is $\mathcal{X}=[0, 1]\times[0, 1]$. We assume that the robot dynamics are governed by the It\^o process (\ref{ito process}), with $R=10^{-3}\times I$ ($I$ is a $2\times2$ identity matrix), and it is commanded to travel at a unit velocity i.e., $\|\pmb{v}^{sys}(t)\|=1$, $t\in[0, T]$. As explained in Section \ref{Sec: PROBLEM FORMULATION}, we discretize the dynamics (\ref{ito process}) under the time partition $\mathcal{T}$. Due to the unit velocity assumption, $\Delta t_j=\|x_{j+1}^{plan}-x_j^{plan}\|$. Hence, our discrete-time robot dynamics are
\begin{equation}\label{discrete robot dynamics eg}
\pmb{x}^{sys}_{j+1}\!\!=\!\pmb{x}^{sys}_j\!\!+\pmb{u}^{sys}_j\!\!+\pmb{n}_j, \quad \pmb{n}_j\!\sim\!\mathcal{N}(0, \|x_{j+1}^{plan}-x_j^{plan}\|R),
\end{equation}
where $\pmb{u}_j^{sys}$ is defined as per (\ref{uj_true}). The model (\ref{discrete robot dynamics eg}) is natural for ground robots whose location uncertainty grows linearly with the distance traveled.

% For computing the second-order risk bound, we choose the sampling rate of each time segment $r_j=3$, $\forall j=1, 2, \hdots, N$.

\par First, we plan trajectories using RRT* with the instantaneous safety criterion \cite{pedram2021rationally} (i.e., at every time step, the confidence ellipse with a fixed safety level is collision-free). For a given configuration space, four planned trajectories with $95\%$, $75\%$, $50\%$, and $25\%$ instantaneous safety levels are shown in Fig. \ref{Fig. planned trajectories}. In each case, the confidence ellipses grow in size with the distance since the robot tracks these trajectories in open-loop.
\begin{figure}[t]
    \centering
      \begin{tabular}{c c}
      \includegraphics[scale=0.25]{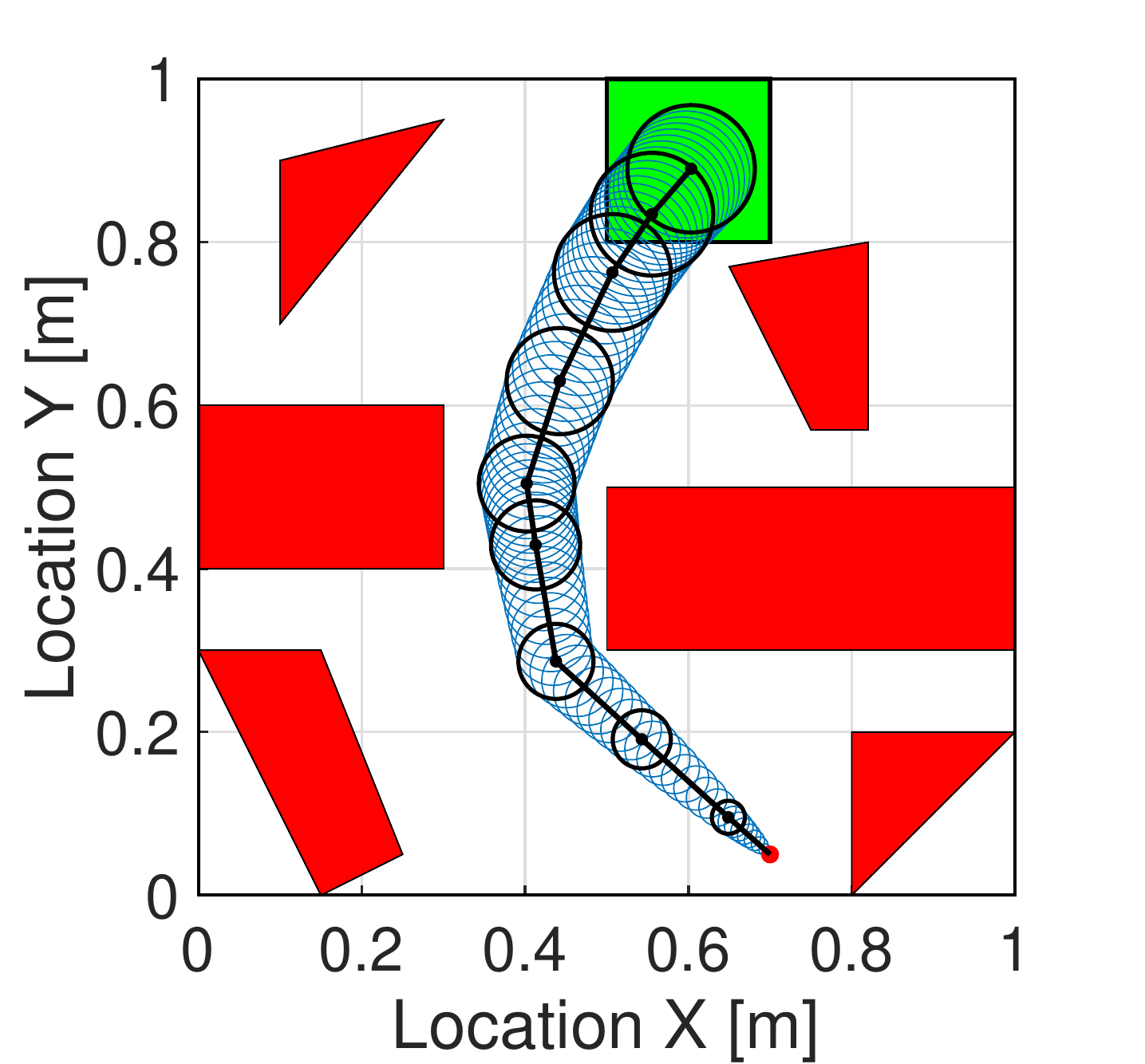} &\!\!\!\!\!\!\!\includegraphics[scale=0.25]{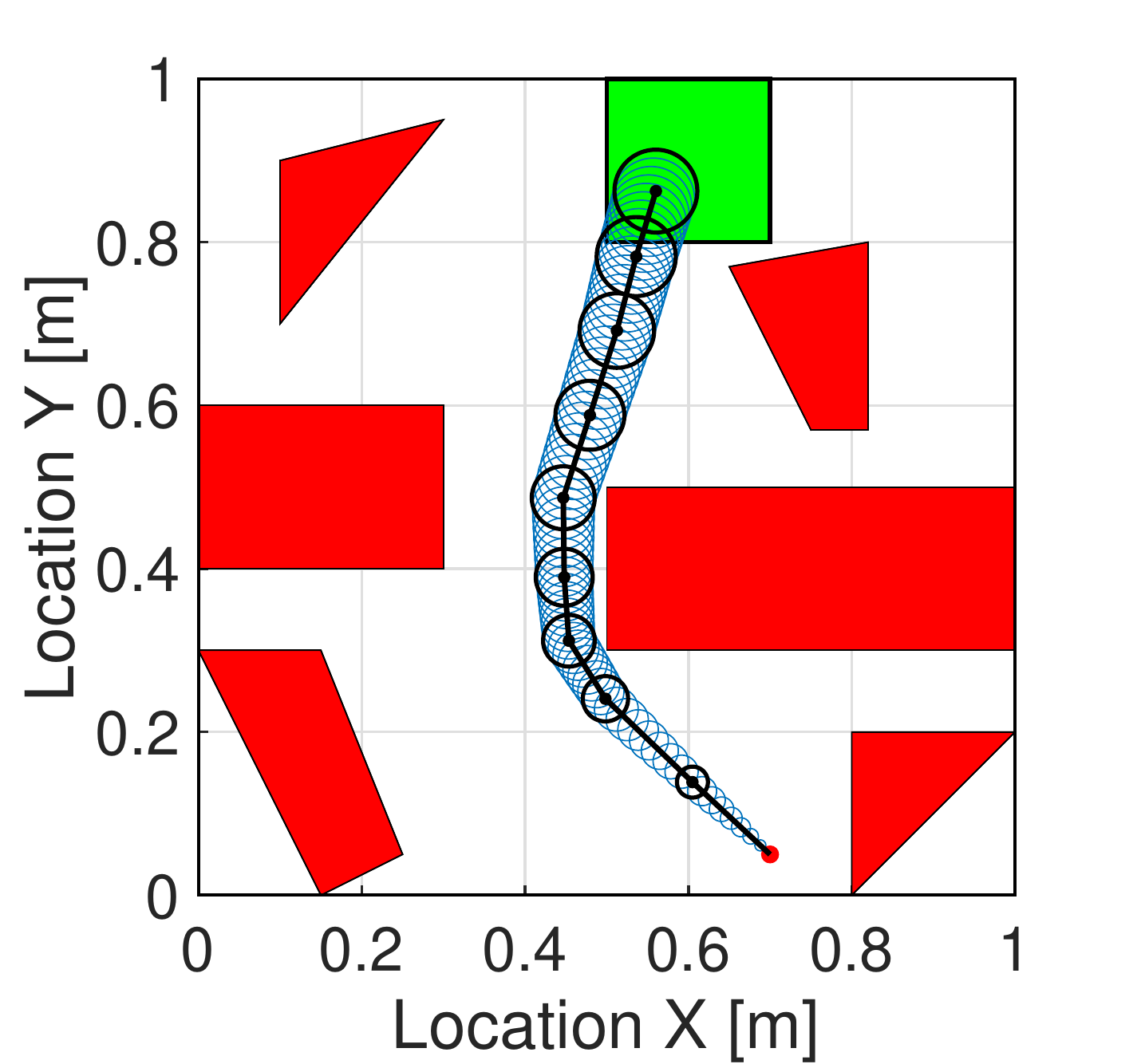} \\
      (a)&(b)\\
        \includegraphics[scale=0.25]{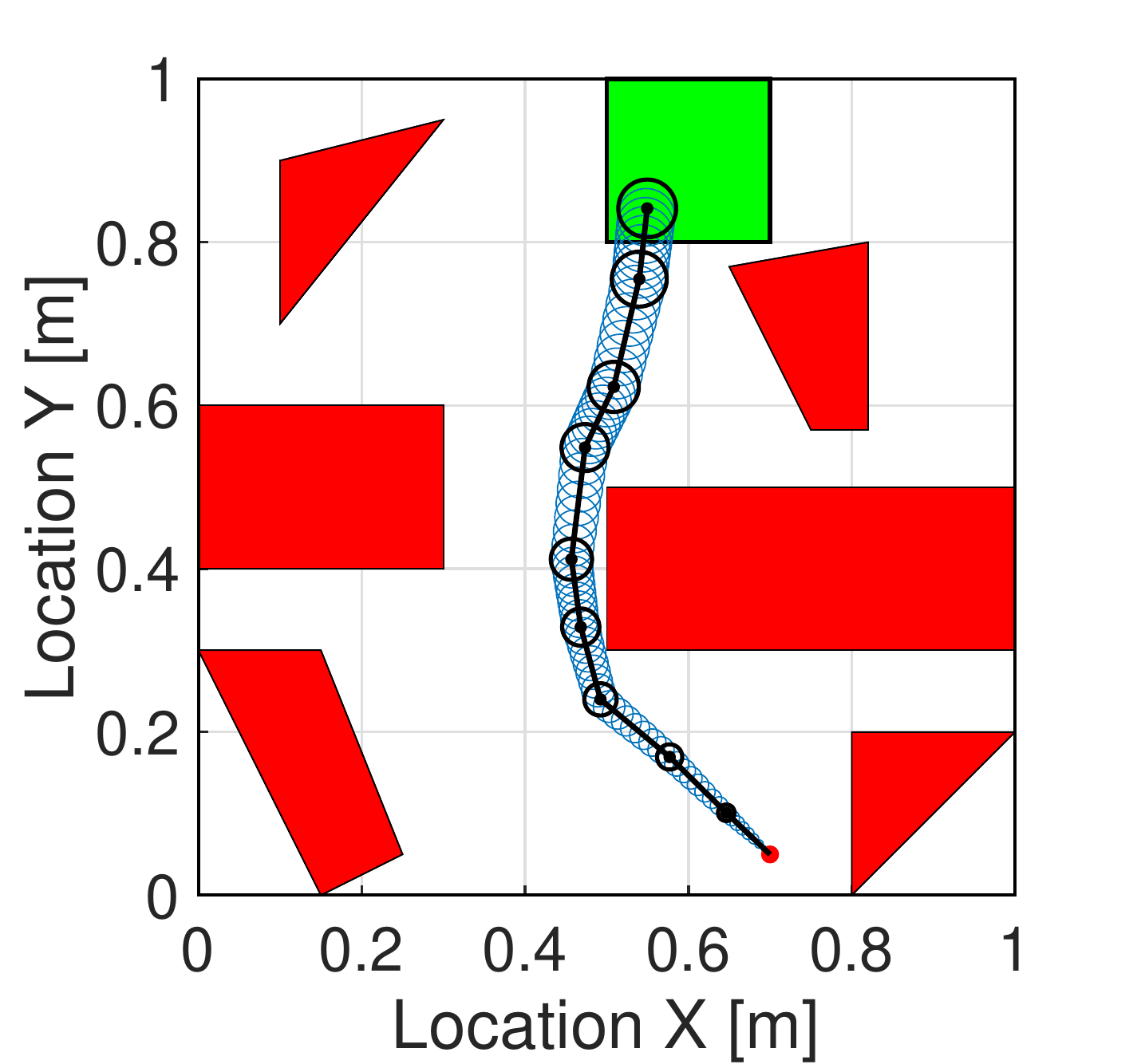} &\!\!\!\!\!\!\! \includegraphics[scale=0.25]{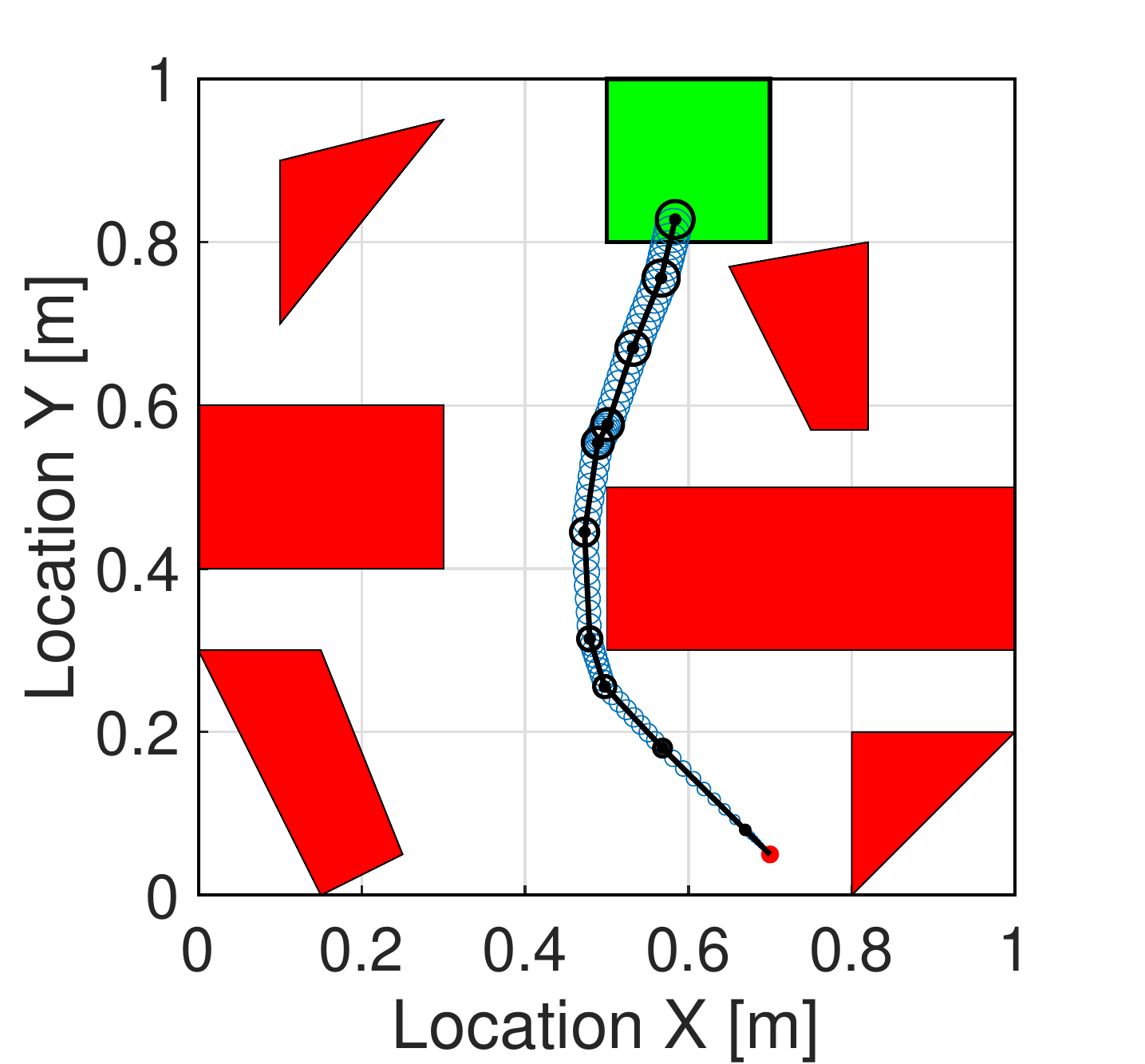}\\
      (c)&(d)
       \end{tabular}
        \caption{Trajectories planned with the instantaneous safety criterion \cite{pedram2021rationally} are shown in black. The black dots on the trajectories represent the planned positions $\{x^{plan}_j\}_{j=1, 2, \hdots, N}$. The red dot represents the initial position $x^{plan}_0$ of the robot. The red-faced polygons represent $\mathcal{X}_{obs}$ and the green faced rectangle represents $\mathcal{X}_{goal}$.  The trajectories are shown with (a) $95\%$, (b) $75\%$, (c) $50\%$ and (d) $25\%$ confidence ellipses. The ellipses at the time steps $\{t_j\}_{j=1, 2, \hdots, N}$ are shown in black and the ellipses at the intermediate time steps are shown in blue.} 
        \label{Fig. planned trajectories}
\end{figure}
Fig. \ref{Fig. all bounds} plots the continuous and discrete-time risk bounds for these plans having different instantaneous safety (risk) levels. For validation, we compute failure probabilities using $10^5$ Monte Carlo simulations at a high rate of time discretization ($r_d=100$) and assume them as the ground truths (shown in black). The dotted graphs are the discrete-time risk bounds ($B_d$) computed using (\ref{discrete-time Boole}) at different rates of time discretization ($r_d$). As is evident from the graph, the discrete-time risk bounds at a lower rate of time discretization underestimate the Monte Carlo estimates, and as the time-discretization rate increases, they become overly conservative. On the other hand, our continuous-time risk bounds ($B_c$) (shown with solid red and blue graphs) are tighter, and at the same time ensure conservatism.
\begin{figure}[t]
    \centering
    \includegraphics[scale=0.2]{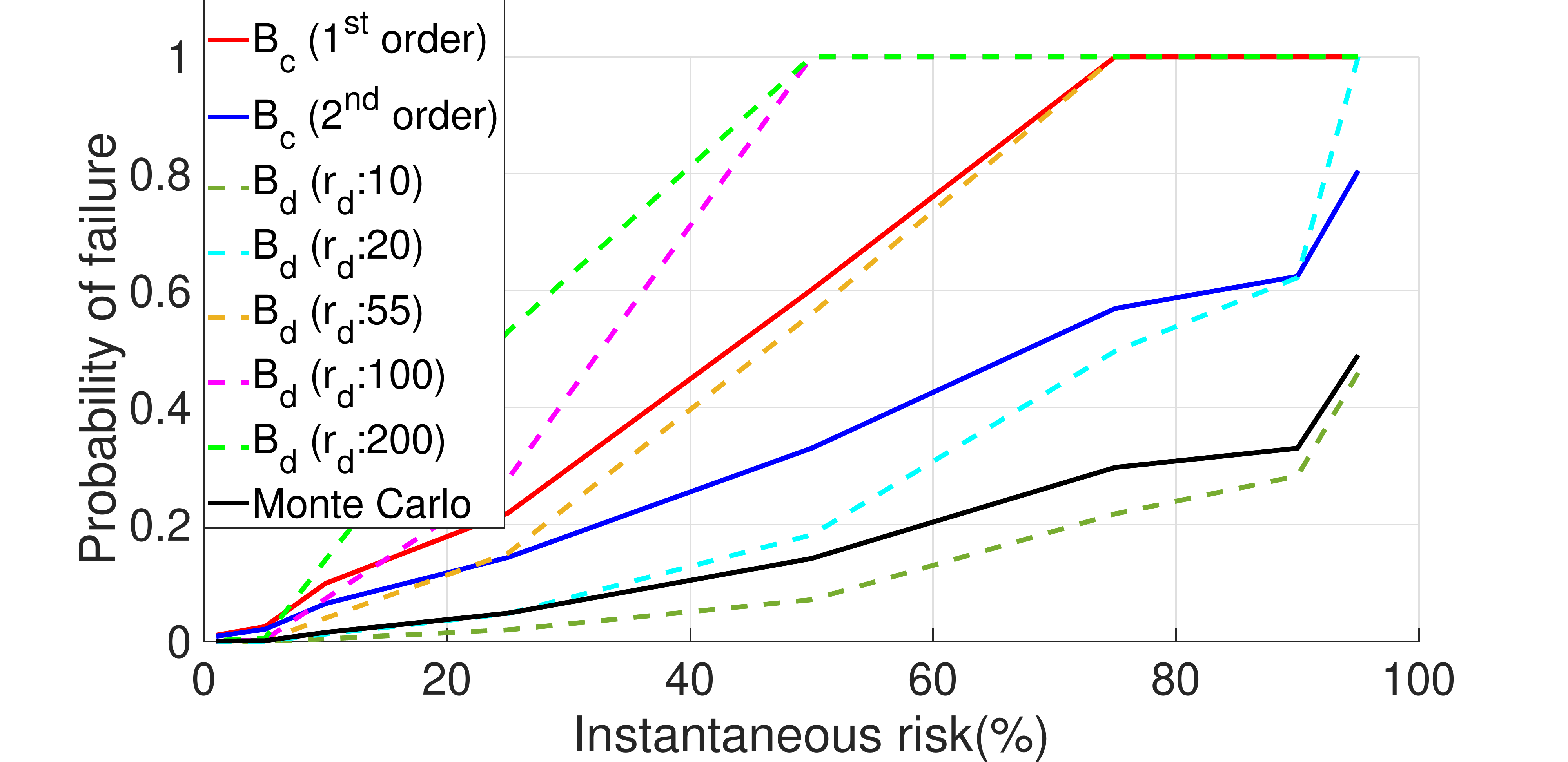}
    \caption{End-to-end probabilities of failure computed for the trajectories with different instantaneous risk levels. The solid red and blue graphs represent the first-order and second-order continuous-time risk bounds ($B_c$) respectively. The dotted graphs are discrete-time risk bounds ($B_d$) computed using (\ref{discrete-time Boole}) at different rates of time discretization ($r_d$). The Monte Carlo estimates of the same trajectories are shown in black.}
    \label{Fig. all bounds}
\end{figure}\par
Next, we demonstrate a larger statistical evaluation over $100$ trajectories planned using RRT* in randomly-generated environments (random initial, goal and obstacle positions). These trajectories are generated with $5\%$ instantaneous safety criterion \cite{pedram2021rationally}. The average risk estimate of $10^5$ Monte Carlo simulations (run at a high rate of time discretization $r_d=100$) is $0.27$. The statistics of the discrete-time and continuous-time risk estimates are shown in Table \ref{Table: comparison}. The discrete-time risk estimates are computed using (\ref{discrete-time Boole}) at increasing rates of time discretization ($r_d$). The continuous-time risk estimates are computed using the method proposed by Ariu et al. \cite{ariu2017chance} and our approach. The \textbf{Bias} and \textbf{RMSE} columns lists respectively the mean (signed) difference and the root mean squared difference between the corresponding estimate and the Monte Carlo estimate. The \textbf{\% Conservative} column reports the percentage of cases where the corresponding estimate was greater than (or within 0.1\% of) the Monte Carlo estimate and the \textbf{Avg. Time} lists the average computation times for our MATLAB implementations.     
\begin{table}[h]
\caption{Comparison of different risk estimates over $100$ trajectories. Computation is performed in MATLAB on a consumer laptop.}
\label{Table: comparison}
\begin{center}
\begin{tabular}{ |c|c|c|c|c| } 
 \hline
 \textbf{Risk Estimates} & \textbf{Avg. Time} & \textbf{Bias}  & \textbf{RMSE} & \textbf{\%Conservative}  \\ 
 \hline\hline
  Monte Carlo &101.50 s & 0 & 0 & - \\  
  \hline
  \textit{Discrete-time} &&&&\\
 $r_d: 5$ &0.14 s& -0.14 & 0.18 & 28\% \\
 $r_d: 10$ & 0.26 s& -0.002 & 0.16 & 59\% \\
 $r_d: 20$ &0.52 s& 0.31 &  0.57 & 82\% \\
 $r_d: 55$ &1.53 s& 1.50 & 2.33 & 100\% \\
 $r_d: 100$ &2.87 s& 2.98 &  4.53 & 100\% \\
 \hline
 \textit{Continuous-time}&&&&\\
 Ariu et al. \cite{ariu2017chance} &1.39 s& 0.97  & 1.33 & 100\% \\
 Our $1^{st}$ order&1.47 s& 0.66 &  0.90 & 100\% \\
 Our $2^{nd}$ order&2.23 s& 0.28 &  0.36 & 100\% \\
 \hline
\end{tabular}
\end{center}
\end{table}
From the data presented, following conclusions can be drawn: First, our risk bounds require significantly less computation time than the Monte Carlo method. Second, unlike the discrete-time risk bounds at the lower sampling rates, our bounds remain conservative (i.e., safe) in all the trials. Lastly, our bounds produce tighter estimates than the discrete-time risk bounds at the higher sampling rates and the continuous-time risk bound of \cite{ariu2017chance}.
\section{Conclusion}
In this paper, we conducted an analysis to estimate the continuous-time collision probability of motion plans for autonomous agents with linear controlled It\^{o} dynamics. We derived two upper bound for the continuous-time risk using the properties of Brownian motion (Markov property and reflection principle), and probability inequalities (Boole and Hunter's inequality). Our method boils down to computing probabilities at the discrete-time steps, simplifying the analysis, yet providing risk guarantees in continuous-time. We show that our bounds outperform the discrete-time risk bound (\ref{discrete-time Boole}) and are cheaper in computation than the na\"ive Monte Carlo sampling method.\par
Our analysis motivates a number of future investigations. This paper assumes that the robot follows a linear controlled It\^{o} process. Future work will focus on risk analysis for systems with generalized stochastic dynamics. Another direction we would like to explore is risk analysis by fusing sampling-based methods and methods from continuous stochastic processes as suggested in \cite{frey2020collision}. This hybrid approach may provide the best of both worlds: high accuracy as well as computational simplicity and compatibility with continuous optimization. 
%\addtolength{\textheight}{-12cm}   % This command serves to balance the column lengths
                                  % on the last page of the document manually. It shortens
                                  % the textheight of the last page by a suitable amount.
                                  % This command does not take effect until the next page
                                  % so it should come on the page before the last. Make
                                  % sure that you do not shorten the textheight too much.

%%%%%%%%%%%%%%%%%%%%%%%%%%%%%%%%%%%%%%%%%%%%%%%%%%%%%%%%%%%%%%%%%%%%%%%%%%%%%%%%

%%%%%%%%%%%%%%%%%%%%%%%%%%%%%%%%%%%%%%%%%%%%%%%%%%%%%%%%%%%%%%%%%%%%%%%%%%%%%%%%

%%%%%%%%%%%%%%%%%%%%%%%%%%%%%%%%%%%%%%%%%%%%%%%%%%%%%%%%%%%%%%%%%%%%%%%%%%%%%%%%

%%%%%%%%%%%%%%%%%%%%%%%%%%%%%%%%%%%%%%%%%%%%%%%%%%%%%%%%%%%%%%%%%%%%%%%%%%%%%%%%
\section*{APPENDIX A}
\section*{Proof of (\ref{Hj in d_j})}
\vspace{-5mm}
\begin{equation}\label{17_LHS}
    \begin{aligned}
    \left(\bigcup\limits_{t\in\mathcal{T}_j}\pmb{x}^{sys}(t)\in \mathcal{H}_j^+\!\! \right)&=\left(\bigcup\limits_{t\in\mathcal{T}_j}a_j^T\pmb{x}^{sys}(t)\geq b_j\right)\\
    &\!\!\!\!\!\!\!\!\!\!\!\!\!\!\!\!\!\!\!\!\!\!\!\!\!\!=\left(\bigcup\limits_{t\in\mathcal{T}_j}a_j^T\pmb{x}(t)\geq b_j-a_j^Tx^{plan}(t)\right).
    \end{aligned}
\end{equation}
Two equalities of (\ref{17_LHS}) follow from (\ref{Hj+-}) and (\ref{deviation trajectory}) respectively. Now, recall that $d_j$ is the minimum distance of $\mathcal{S}_j$ from $\mathcal{X}_{obs}$ i.e., $d_j=b_j-a_j^Ty_2^*$, where $y_2^*$ is the solution to the optimization problem (\ref{optimization problem}). Noting that $b_j-a_j^Tx^{plan}(t)\geq d_j$, $\forall\, t\in \mathcal{T}_j$, from (\ref{17_LHS}),
\begin{equation*}
\left(\bigcup\limits_{t\in\mathcal{T}_j}\pmb{x}^{sys}(t)\in \mathcal{H}_j^+\!\! \right)\subseteq\left(\bigcup\limits_{t\in\mathcal{T}_j}a_j^T\pmb{x}(t)\geq d_j\!\right).
\end{equation*}
\section*{APPENDIX B}
\section*{Computation of $P(\mathcal{D}_j)$ and $P(\mathcal{D}_j\cap \mathcal{D}_{j+1})$}
Let us define: $\Delta\hat{t}_j^i\coloneqq\hat{t}_j^{i+1}-\hat{t}_j^i$, and $\hat{\pmb{x}}_j^i\coloneqq \pmb{x}(\hat{t}_j^i)$.
From (\ref{xj dynamics}), we can write 
\begin{equation}\label{seg dynamics}
\hat{\pmb{x}}_j^{i+1}=\hat{\pmb{x}}_j^i+\hat{\pmb{n}}_j^i, \qquad \hat{\pmb{n}}_j^i\sim\mathcal{N}(0, \Sigma_{\hat{\pmb{n}}_j^i})  
\end{equation}
where $\hat{\pmb{x}}_j^0=\pmb{x}_{j-1}$, $\Sigma_{\hat{\pmb{n}}_j^i}\coloneqq\Delta\hat{t}_j^iR$, for $i=0, 1, \hdots, r_j-1$, and $j=1, 2, \hdots, N$. Multiplying both sides of (\ref{seg dynamics}) by $a_j^T$ we get
\begin{equation*}
a_j^T\hat{\pmb{x}}_j^{i+1}=\pmb{z}^{i+1}_j=a_j^T\hat{\pmb{x}}_j^i+a_j^T\hat{\pmb{n}}_j^i, \qquad \hat{\pmb{n}}_j^i\sim\mathcal{N}(0, \Sigma_{\hat{\pmb{n}}_j^i}).     
\end{equation*}
Stacking all $\pmb{z}_j^{i}$ for $i=0, 1, \hdots, r_j$, we can write the dynamics for the entire time segment $\mathcal{T}_j$ as
\begin{equation}\label{xj_seg}
    \pmb{z}_j^{seg}=M_j\hat{\pmb{x}}_j^{0}+K_j\hat{\pmb{n}}_j^{seg}, \qquad \hat{\pmb{n}}_j^{seg}\sim\mathcal{N}(0, \Sigma_{\hat{\pmb{n}}_j^{seg}})
\end{equation}
where $\pmb{z}_j^{seg}\coloneqq\begin{bmatrix} \pmb{z}_j^{0}& \pmb{z}_j^{1}& \hdots &\pmb{z}_j^{r_j} \end{bmatrix}^T$, $M_j=a_j^T\cdot\mathds{1}$,\\
$
\hat{\pmb{n}}_j^{seg}=\begin{bmatrix}
\hat{\pmb{n}}_j^{0}&
\hat{\pmb{n}}_j^{1}&
\hdots&
\hat{\pmb{n}}_j^{r_j-1}
\end{bmatrix}^T, \quad \Sigma_{\hat{\pmb{n}}_j^{seg}}=\underset{0 \leq i \leq r_j-1}{\text{diag}}\Sigma_{\hat{\pmb{n}}_j^i},
$
\begin{equation*}
K_j=\begin{bmatrix}
0&0&\hdots&0\\
a_j^T&0&\hdots&0\\
a_j^T&a_j^T&\hdots&0\\
\vdots&\vdots&\ddots&\vdots\\
a_j^T&a_j^T&\hdots&a_j^T\end{bmatrix}    
\end{equation*}
\subsection*{Computation of $P(\mathcal{D}_j)$:}
In order to compute $P(\mathcal{D}_j)$, we need to find the distribution of $\pmb{z}_j^{seg}$. Since $\hat{\pmb{x}}_j^{0}=\pmb{x}_{j-1}$, it is distributed as $\hat{\pmb{x}}_j^{0}\sim\mathcal{N}(0, \Sigma_{\pmb{x}_{j-1}})$. Hence, from (\ref{xj_seg}), the p.d.f. of $\pmb{z}_j^{seg}$ can be written as $\mu_{\pmb{z}_j^{seg}}({z}_j^{seg})=\mathcal{N}(0, \Sigma_{\pmb{z}_j^{seg}})$ where $\Sigma_{\pmb{z}_j^{seg}}=M_j\Sigma_{\pmb{x}_{j-1}}M_j^T+K_j\Sigma_{\hat{\pmb{n}}_j^{seg}}K_j^T$. Now, $P(\mathcal{D}_j)$ can be computed as
\begin{equation}\label{P(Dj)}
    P(\mathcal{D}_j)=\int_{\mathcal{C}_j} \mu_{\pmb{z}_j^{seg}}({z}_j^{seg}) \,d{z}_j^{seg}
\end{equation}
where $\mathcal{C}_j$ is a hypercube of dimension $r_j+1$, having its sides along each direction run from $-\infty$ to $d_j$. 
\subsection*{Computation of $P(\mathcal{D}_j\cap \mathcal{D}_{j+1})$:}
Let us define $\pmb{z}_{j,\,j+1}^{seg}\coloneqq\begin{bmatrix} \pmb{z}_{j}^{seg} & \pmb{z}_{j+1}^{seg} \end{bmatrix}^T$. In order to compute $P(\mathcal{D}_j\cap \mathcal{D}_{j+1})$, we need to find the distribution of $\pmb{z}_{j,\,j+1}^{seg}$. First, let us write $\hat{\pmb{x}}_{j+1}^{0}$ in terms of $\hat{\pmb{x}}_{j}^{0}$.
\begin{equation}\label{xk in xj}
\hat{\pmb{x}}_{j+1}^{0}=\hat{\pmb{x}}_{j}^{0}+
G_j\hat{\pmb{n}}_{j}^{seg}
\end{equation}
where $G_j=\begin{bmatrix}
    I & I &\hdots & I
    \end{bmatrix}_{n\times nr_j}$, $j=1, 2, \hdots, N-1,$ and $I$ is an $n\times n$ identity matrix. We know that
\begin{equation}\label{xk_seg}
    \pmb{z}_{j+1}^{seg}=M_{j+1}\hat{\pmb{x}}_{j+1}^{0}+K_{j+1}\hat{\pmb{n}}_{j+1}^{seg}.
\end{equation}
Substituting $\hat{\pmb{x}}_{j+1}^{0}$ from (\ref{xk in xj}) in (\ref{xk_seg}), we get
\begin{equation}\label{xk_seg in xj_0}
\begin{aligned}
\pmb{z}_{j+1}^{seg}=M_{j+1}\hat{\pmb{x}}_j^{0}+M_{j+1}G_j\hat{\pmb{n}}_j^{seg}+K_{j+1}\hat{\pmb{n}}_{j+1}^{seg}.  
\end{aligned} 
\end{equation}
Let $H_{j,\,j+1}\coloneqq cov(\pmb{z}_{j}^{seg},\;\pmb{z}_{j+1}^{seg})$.
Using (\ref{xj_seg}) and (\ref{xk_seg in xj_0}), we can show that
\begin{equation}\label{Hjk}
H_{j,\,j+1}=M_j\Sigma_{\pmb{x}_{j-1}}M_{j+1}^T+K_j\Sigma_{\hat{\pmb{n}}_j^{seg}}G_j^TM_{j+1}^T.    
\end{equation}
For computing (\ref{Hjk}) we use the fact that
\begin{equation*}
    cov(\hat{\pmb{x}}_j^{0}, \hat{\pmb{n}}_j^{seg})=cov(\hat{\pmb{x}}_j^{0}, \hat{\pmb{n}}_{j+1}^{seg})=cov(\hat{\pmb{n}}_{j}^{seg},\hat{\pmb{n}}_{j+1}^{seg})=0
\end{equation*}
Now, the p.d.f. of $\pmb{z}_{j,\,j+1}^{seg}$ can be written as $ \mu_{\pmb{z}_{j,\,j+1}^{seg}}\left({z}_{j,\,j+1}^{seg}\right)=\mathcal{N}(0, \Sigma_{\pmb{z}_{j,\,j+1}^{seg}})$ where
\begin{equation*}
   \Sigma_{\pmb{z}_{j,\,j+1}^{seg}}=\begin{bmatrix}
   \Sigma_{\pmb{z}_{j}^{seg}} & H_{j,\,j+1}\\
    H_{j,\,j+1}^T & \Sigma_{\pmb{z}_{j+1}^{seg}}
    \end{bmatrix}
\end{equation*}
and $P(\mathcal{D}_j\cap \mathcal{D}_{j+1})$ can be computed as
\begin{equation}\label{P(Dj,Dk)}
P(\mathcal{D}_j\cap \mathcal{D}_{j+1})=\!\int_{\mathcal{C}_j}\!\int_{\mathcal{C}_{j+1}} \!\!\!\!\mu_{\pmb{z}_{j,\,j+1}^{seg}}\left({z}_{j,\,j+1}^{seg}\right) \,d{z}_{j+1}^{seg}d{z}_{j}^{seg}.  
\end{equation}
MATLAB's \texttt{mvncdf} function can be utilized for computing (\ref{P(Dj)}) and (\ref{P(Dj,Dk)}) numerically.
\bibliographystyle{IEEEtran}
\bibliography{bibliography}
\end{document}